\def\Utriangle{
\m{\begin{qcircuit}[scale=0.39]
    \grid{2.8}{0,3.2,4.8}
    \utriangle{1.4}{0}{4.8}{1.6,4.8}
\end{qcircuit}}
=
\m{\begin{qcircuit}[scale=0.39]
    \Period{7.6}{0}
    \grid{7.2}{0,3.2,4.8}
    \Ugate{n_1}{1.4}{4.8}{3.2}
    \colgate{white, white}{$\bcdots$}{3.68,4.8}
    \colgate{white, white}{$\bcdots$}{3.68,3.2}
    \colgate{white, white}{$\bddots$}{3.68,1.6}
    \colgate{white, white}{$\bcdots$}{3.68,0}
    \Ugate{n_k}{5.8}{4.8}{0}
\end{qcircuit}}
}
\def\Vtrapezoid{
\m{\begin{qcircuit}[scale=0.39]
    \grid{2.8}{0,3.2,4.8,6.4}
    \vtrapezoid{1.4}{6.4}{4.8}{0}{2,1}
\end{qcircuit}}
=
\m{\begin{qcircuit}[scale=0.39]
    \grid{7.2}{0,3.2,4.8,6.4}
    \Vgate{n_1}{1.4}{3.2}{6.4}{4.8}
    \colgate{white, white}{$\bcdots$}{3.68,6.4}
    \colgate{white, white}{$\bcdots$}{3.68,4.8}
    \colgate{white, white}{$\bcdots$}{3.68,3.2}
    \colgate{white, white}{$\bddots$}{3.68,1.6}
    \colgate{white, white}{$\bcdots$}{3.68,0}
    \Vgate{n_k}{5.8}{4.8}{6.4}{0}
\end{qcircuit}}
}
\def\Vtriangle{
\m{\begin{qcircuit}[scale=0.39]
    \grid{2.8}{0,1.6,4.8,6.4}
    \vtriangle{1.4}{0}{6.4}{1.6,6.4}
\end{qcircuit}}
=
\m{\begin{qcircuit}[scale=0.39]
    \Period{7.6}{0}
    \grid{7.2}{0,1.6,4.8,6.4}
    \vtrapezoid{1.4}{6.4}{4.8}{0}{2,1}
    \colgate{white, white}{$\bcdots$}{3.68,6.4}
    \colgate{white, white}{$\bcdots$}{3.68,4.8}
    \colgate{white, white}{$\bddots$}{3.68,3.2}
    \colgate{white, white}{$\bcdots$}{3.68,1.6}
    \colgate{white, white}{$\bcdots$}{3.68,0}
    \vtrapezoid{5.8}{6.4}{1.6}{0}{2,1}
\end{qcircuit}}
}
\def\stair{
\m{\begin{qcircuit}[scale=0.39]
    \grid{5}{0,1.6,3.2,6.4,8}
    \colgate{white, white}{$\bvdots$}{0.8,4.8}
    \colgate{white, white}{$\bvdots$}{4.2,4.8}
    \staircasealt{1.4}{0}{3.6}{8}{2.5,4}
\end{qcircuit}}
=
\m{\begin{qcircuit}[scale=0.39]
    \Period{16.4}{0}
    \grid{16}{0,1.6,3.2,6.4,8}
    \colgate{white, white}{$\bvdots$}{1.4,4.8}
    \colgate{white, white}{$\bcdots$}{10.2,8}
    \colgate{white, white}{$\bcdots$}{10.2,6.4}
    \colgate{white, white}{$\bcdots$}{10.2,4.8}
    \colgate{white, white}{$\biddots$}{10.2,4.8}
    \colgate{white, white}{$\bcdots$}{10.2,3.2}
    \colgate{white, white}{$\bcdots$}{10.2,1.6}
    \colgate{white, white}{$\bcdots$}{10.2,0}
    \Swap{1.4}{0}{1.6}
    \CX{n_1}{3.6}{1.6}{0}
    \Swap{5.8}{1.6}{3.2}
    \CX{n_2}{8}{3.2}{1.6}
    \Swap{12.4}{8}{6.4}
    \CX{n_k}{14.6}{8}{6.4}
    \colgate{white, white}{$\bvdots$}{14.6,4.8}
\end{qcircuit}} 
}
\def\ladder{
\m{\begin{qcircuit}[scale=0.39]
    \grid{2.8}{0,3.2,6.4}
    \colgate{white, white}{$\bvdots$}{0.4,1.6}
    \colgate{white, white}{$\bvdots$}{0.4,4.8}
    \colgate{white, white}{$\bvdots$}{2.4,1.6}
    \colgate{white, white}{$\bvdots$}{2.4,4.8}
    \updownalt{1.4}{0}{1.4}{6.4}{1.4}{0}{1.4,0.9}
\end{qcircuit}}
=
\m{\begin{qcircuit}[scale=0.39]
    \Period{8.7}{0}
    \grid{8.3}{0,3.2,6.4}
    \colgate{white, white}{$\bvdots$}{0.4,1.6}
    \colgate{white, white}{$\bvdots$}{0.4,4.8}
    \staircasealt{1.4}{0}{3.6}{6.4}{2.5,3.2}
    \colgate{white, white}{$\bvdots$}{7.9,1.6}
    \colgate{white, white}{$\bvdots$}{7.9,4.8}
    \staircasealt{6.9}{3.2}{5.8}{6.4}{6.3,4.8}
\end{qcircuit}}
}
\def\nfaff{
\m{\begin{qcircuit}[scale=0.39]
    \grid{11.6}{0,3.2,4.8}
    \updownalt{1.4}{4.8}{1.4}{4.8}{1.4}{4.8}{1.37,4.65}
    \updownalt{3.6}{3.2}{3.6}{4.8}{3.6}{3.2}{3.55,3.2}
    \colgate{white, white}{$\bcdots$}{5.8,4.8}
    \colgate{white, white}{$\bcdots$}{5.8,3.2}
    \colgate{white, white}{$\bddots$}{5.8,1.6}
    \colgate{white, white}{$\bcdots$}{5.8,0}
    \updownalt{8}{0}{8}{4.8}{8}{0}{8,0.8}
    \Xblock{}{10.2}{4.8}
    \Xblock{}{10.2}{3.2}
    \Xblock{}{10.2}{0}
    \colgate{white, white}{$\bvdots$}{10.2,1.6}
  \end{qcircuit}}
}
\def\exampleAff{
\m{\begin{qcircuit}[scale=0.39]
    \Period{21}{0}
    \grid{20.6}{0,1.6,3.2}
    \Swap{1.4}{3.2}{1.6}
    \Swap{3.6}{0}{1.6}
    \Swap{5.8}{1.6}{3.2}
    \CX{}{8}{3.2}{1.6}
    \Swap{10.2}{1.6}{3.2}
    \CX{}{12.4}{3.2}{1.6}
    \Swap{14.6}{0}{1.6}
    \CX{}{16.8}{1.6}{0}
    \Xgate{}{19}{3.2}
\end{qcircuit}} 
}
\def\nfdiag{
\m{\begin{qcircuit}[scale=0.39]
    \Omeg{k}{-1}{2.4} 
    \grid{20.4}{0,3.2,4.8}
    \Tblock{}{1.4}{4.8}
    \colgate{white, white}{$\bvdots$}{1.4,1.6}
    \Tblock{}{1.4}{3.2}
    \Tblock{}{1.4}{0}
    \utriangle{3.6}{0}{4.8}{3.75,4.8}
    \utriangle{5.8}{0}{3.2}{5.95,3.2}
    \colgate{white, white}{$\bcdots$}{8.08,4.8}
    \colgate{white, white}{$\bddots$}{8.08,1.6}
    \colgate{white, white}{$\bcdots$}{8.08,3.2}
    \colgate{white, white}{$\bcdots$}{8.08,0}
    \utriangle{10.2}{0}{0}{10.65,0.2}
    \vtriangle{12.4}{0}{4.8}{12.55,4.8}
    \vtriangle{14.6}{0}{3.2}{14.75,3.2}
    \colgate{white, white}{$\bcdots$}{16.88,4.8}
    \colgate{white, white}{$\bddots$}{16.88,1.6}
    \colgate{white, white}{$\bcdots$}{16.88,3.2}
    \colgate{white, white}{$\bcdots$}{16.88,0}
    \vtriangle{19}{0}{0}{19.45,0.2}
  \end{qcircuit}}
}
\def\nfCCZ{
\m{\begin{qcircuit}[scale=0.39]
    \grid{2.8}{0,1.6,3.2}
    \Ccz{1.4}{0}{1.6}{3.2}
\end{qcircuit}}
=
\m{\begin{qcircuit}[scale=0.39]
    \Period{12}{0}
    \grid{11.6}{0,1.6,3.2}
    \Tgate{}{1.4}{0}
    \Tgate{}{1.4}{1.6}
    \Tgate{}{1.4}{3.2}
    \Ugate{3}{3.6}{1.6}{3.2}
    \Ugate{3}{5.8}{0}{3.2}
    \Ugate{3}{8}{0}{1.6}
    \Vgate{}{10.2}{0}{1.6}{3.2}
\end{qcircuit}}
}
\def\bifunctoriality{
\m{\begin{qcircuit}[scale=0.39]
    \grid{5}{0,1.6}
    \colgate{white!20}{$f$}{1.4,1.6}
    \colgate{white!20}{$g$}{3.6,0}
\end{qcircuit}}
=
\m{\begin{qcircuit}[scale=0.39]
    \Period{5.3}{0}
    \grid{4.9}{0,1.6}
    \colgate{white!20}{$g$}{1.4,0}
    \colgate{white!20}{$f$}{3.6,1.6}
\end{qcircuit}}
}
\def\symmetries{
\m{\begin{qcircuit}[scale=0.39]
    \grid{2.8}{0,1.6}
    \Swap{1.4}{0}{1.6}
\end{qcircuit}}
\qquad\mbox{ and }\qquad
\m{\begin{qcircuit}[scale=0.39] 
    \grid{2.8}{0,1.6,3.2}
    \cleargate{$T$}{1.4,0}
    \cleargate{$T$}{1.4,1.6}
    \cleargate{$T$}{1.4,3.2} 
    \diagwire{1.4}{0}{3.2} 
    \diagwire{1.4}{3.2}{1.6} 
    \diagwire{1.4}{1.6}{0}
\end{qcircuit}}
}
\def\Naturality{
\m{\begin{qcircuit}[scale=0.39]
    \grid{5}{0,1.6}
    \colgate{white!20}{$f$}{1.4,1.6}
    \colgate{white!20}{$g$}{1.4,0}
    \Swap{3.6}{0}{1.6}
\end{qcircuit}}
=
\m{\begin{qcircuit}[scale=0.39]
    \grid{5}{0,1.6}
    \colgate{white!20}{$g$}{3.6,1.6}
    \colgate{white!20}{$f$}{3.6,0}
    \Swap{1.4}{0}{1.6}
\end{qcircuit}}
\qquad\mbox{ and }\qquad
\m{\begin{qcircuit}[scale=0.39] 
    \grid{5}{0,1.6,3.2}
    \widebigcolgate{$h$}{1.4,1.6}{1.4,3.2}{.8}{white}
    \cleargate{$T$}{3.6,0}
    \cleargate{$T$}{3.6,1.6}
    \cleargate{$T$}{3.6,3.2} 
    \diagwire{3.6}{0}{3.2} 
    \diagwire{3.6}{3.2}{1.6} 
    \diagwire{3.6}{1.6}{0}
    \colgate{white!20}{$f$}{1.4,0}
\end{qcircuit}}
=
\m{\begin{qcircuit}[scale=0.39] 
    \Period{5.4}{0}
    \grid{5}{0,1.6,3.2}
    \colgate{white!20}{$f$}{3.6,3.2}
    \widebigcolgate{$h$}{3.6,0}{3.6,1.6}{.8}{white}
    \cleargate{$T$}{1.4,0}
    \cleargate{$T$}{1.4,1.6}
    \cleargate{$T$}{1.4,3.2} 
    \diagwire{1.4}{0}{3.2} 
    \diagwire{1.4}{3.2}{1.6} 
    \diagwire{1.4}{1.6}{0}
\end{qcircuit}}
}
\def\spatiality{ 
\m{\begin{qcircuit}[scale=0.39]
    \grid{2.8}{0}
    \colgate{white!20}{$\lambda$}{1.4,1.6}
\end{qcircuit}}
=
\m{\begin{qcircuit}[scale=0.39]
    \grid{2.8}{1.6}
    \colgate{white!20}{$\lambda$}{1.4,0}
\end{qcircuit}}
}
\def\coherence{
\m{\begin{qcircuit}[scale=0.39]
    \grid{5}{0,1.6,3.2}
    \Swap{1.4}{0}{1.6}
    \Swap{3.6}{1.6}{3.2}
\end{qcircuit}}
=
\m{\begin{qcircuit}[scale=0.39] 
    \Period{3.2}{0}
    \grid{2.8}{0,1.6,3.2}
    \cleargate{$T$}{1.4,0}
    \cleargate{$T$}{1.4,1.6}
    \cleargate{$T$}{1.4,3.2} 
    \diagwire{1.4}{0}{3.2} 
    \diagwire{1.4}{3.2}{1.6} 
    \diagwire{1.4}{1.6}{0}
\end{qcircuit}}
}
\def\block{
\m{\begin{qcircuit}[scale=0.39] \grid{2.8}{0}
    \colgate{white!20}{$f^n$}{1.4,0}
\end{qcircuit}} 
=
\mp{0.75}{\begin{qcircuit}[scale=0.39] 
    \colgate{white, white}{$\bdot$}{7.5,0} 
    \colgate{white, white}{$\bcdots$}{3.67,0} 
    \grid{2.8}{0} 
    \gridx{4.4}{7.2}{0} 
    \colgate{white!20}{$f$}{1.4,0} 
    \colgate{white!20}{$f$}{5.8,0} 
    \brace{$n$}{0.6}{6.6}{-1}
\end{qcircuit}}
}
\def\NonAdjacent{
\m{\begin{qcircuit}[scale=0.39] 
    \grid{2.8}{0,1.6,3.2}
    \controlwires{1.4,0}{3.2}
    \colgate{white}{$f$}{1.4,0} 
    \colgate{white}{$f$}{1.4,3.2}
\end{qcircuit}} 
= 
\m{\begin{qcircuit}[scale=0.39]
    \Period{7.6}{0}
    \grid{7.2}{0,1.6,3.2} 
    \Swap{1.4}{1.6}{0}
    \controlwires{3.6,1.6}{3.2}
    \colgate{white}{$f$}{3.6,1.6} 
    \colgate{white}{$f$}{3.6,3.2}
    \Swap{5.8}{1.6}{0}
\end{qcircuit}}
}
\def\DerivationOne{
\begin{align*}
    \m{\begin{qcircuit}[scale=0.39]
        \grid{5}{0,1.6}
        \Swap{1.4}{0}{1.6}
        \Ugate{}{3.6}{0}{1.6}
      \end{qcircuit}}
    &= \m{\begin{qcircuit}[scale=0.39]
        \grid{18.2}{0,1.6}
        \CX{}{1.4}{1.6}{0}
        \Swap{3.6}{1.6}{0}
        \CX{}{5.8}{1.6}{0}
        \Swap{8}{1.6}{0}
        \CX{}{10.2}{1.6}{0}
        \CX{}{12.4}{1.6}{0}
        \Tgate{}{14.6}{0}
        \CX{}{16.8}{1.6}{0}
      \end{qcircuit}} \\
    &= \m{\begin{qcircuit}[scale=0.39]
        \grid{13.8}{0,1.6}
        \CX{}{1.4}{1.6}{0}
        \Swap{3.6}{1.6}{0}
        \CX{}{5.8}{1.6}{0}
        \Swap{8}{1.6}{0}
        \Tgate{}{10.2}{0}
        \CX{}{12.4}{1.6}{0}
      \end{qcircuit}} \\
    &= \m{\begin{qcircuit}[scale=0.39]
        \grid{13.8}{0,1.6}
        \CX{}{1.4}{1.6}{0}
        \Swap{3.6}{1.6}{0}
        \CX{}{5.8}{1.6}{0}
        \Tgate{}{8}{1.6}
        \Swap{10.2}{1.6}{0}
        \CX{}{12.4}{1.6}{0}
      \end{qcircuit}} \\
    &= \m{\begin{qcircuit}[scale=0.39]
        \grid{13.8}{0,1.6}
        \CX{}{1.4}{1.6}{0}
        \Swap{3.6}{1.6}{0}
        \Tgate{}{5.8}{1.6}
        \CX{}{8}{1.6}{0}
        \Swap{10.2}{1.6}{0}
        \CX{}{12.4}{1.6}{0}
      \end{qcircuit}} \\
    &= \m{\begin{qcircuit}[scale=0.39]
        \grid{13.8}{0,1.6}
        \CX{}{1.4}{1.6}{0}
        \Tgate{}{3.6}{0}
        \Swap{5.8}{1.6}{0}
        \CX{}{8}{1.6}{0}
        \Swap{10.2}{1.6}{0}
        \CX{}{12.4}{1.6}{0}
      \end{qcircuit}} \\
    &= \m{\begin{qcircuit}[scale=0.39]
        \grid{18.2}{0,1.6}
        \CX{}{1.4}{1.6}{0}
        \Tgate{}{3.6}{0}
        \CX{}{5.8}{1.6}{0}
        \CX{}{8}{1.6}{0}
        \Swap{10.2}{1.6}{0}
        \CX{}{12.4}{1.6}{0}
        \Swap{14.6}{1.6}{0}
        \CX{}{16.8}{1.6}{0}
      \end{qcircuit}}
     = \m{\begin{qcircuit}[scale=0.39]
        \Period{5.4}{0}
        \grid{5}{0,1.6}
        \Ugate{}{1.4}{0}{1.6}
        \Swap{3.6}{0}{1.6}
      \end{qcircuit}}
\end{align*}
}
\def\DerivationTwo{
\begin{align*}
\m{\begin{qcircuit}[scale=0.39]
    \grid{13.8}{0,1.6,3.2,4.8}
    \Ugate{}{1.4}{3.2}{4.8}
    \CX{}{3.6}{3.2}{1.6}
    \CX{}{5.8}{1.6}{0}
    \Tgate{}{8.0}{0}
    \CX{}{10.2}{1.6}{0}
    \CX{}{12.4}{3.2}{1.6}
\end{qcircuit}}
=
\m{\begin{qcircuit}[scale=0.39]
    \grid{13.8}{0,1.6,3.2,4.8}
    \CX{}{1.4}{3.2}{1.6}
    \CX{}{3.6}{1.6}{0}
    \Tgate{}{5.8}{0}
    \CX{}{8.0}{1.6}{0}
    \CX{}{10.2}{3.2}{1.6}
    \Ugate{}{5.8}{3.2}{4.8}
\end{qcircuit}}
=
\m{\begin{qcircuit}[scale=0.39]
    \Period{14.2}{0}
    \grid{13.8}{0,1.6,3.2,4.8}
    \CX{}{1.4}{3.2}{1.6}
    \CX{}{3.6}{1.6}{0}
    \Tgate{}{5.8}{0}
    \CX{}{8.0}{1.6}{0}
    \CX{}{10.2}{3.2}{1.6}
    \Ugate{}{12.4}{3.2}{4.8}
\end{qcircuit}}
\end{align*}
}
\title{A Finite Presentation of $\mathbf{CNOT}$-Dihedral
  Operators}
\author{
	Matthew Amy
	\institute{Institute for Quantum Computing and \\
		David R. Cheriton School of Computer Science \\ 
		University of Waterloo \\ 
		Waterloo, Canada}
	\email{matt.e.amy@gmail.com}
	\and
	Jianxin Chen
	\qquad\qquad
  	Neil J.\ Ross
  	\institute{Institute for Advanced Computer Studies and \\
  		Joint Center for Quantum Information and Computer Science \\ 
  		University of Maryland \\
  		College Park, USA}
	\email{chenkenshin@gmail.com \qquad neil.jr.ross@gmail.com}
}
\begin{document}

\maketitle

\begin{abstract}
  We give a finite presentation by generators and relations of the unitary
  operators expressible over the $\cnottx$ gate set, also known as
  $\CNOT$-dihedral operators. To this end, we introduce a notion of
  normal form for $\CNOT$-dihedral circuits and prove that every
  $\CNOT$-dihedral operator admits a unique normal form. Moreover, we
  show that in the presence of certain structural rules only finitely
  many circuit identities are required to reduce an arbitrary
  $\CNOT$-dihedral circuit to its normal form.

  By appropriately restricting our relations, we obtain a finite
  presentation of unitary operators expressible over the $\cnott$ gate
  set as a corollary.
\end{abstract}

\section{Introduction}
\label{sec:intro}

The \emph{Clifford+$T$} gate set consists of the $\CNOT$, Hadamard,
and $T$ gates \cite{NC}. This gate set has been the focus of recent
efforts in the study of quantum circuits due to its close connection
to quantum fault tolerance.  As a result, the theory of single-qubit
Clifford+$T$ circuits is now well-established
\cite{KMM-exact,MA08,RS16}. In contrast, multi-qubit Clifford+$T$
circuits are not very well understood, despite interesting results
\cite{GS13,GKMR}. The difficulties associated with multi-qubit
circuits shifted emphasis from the full Clifford+$T$ gate set to
restricted classes of circuits. In particular, circuits over the
$\s{\CNOT, T, X}$ gate set, known as \emph{$CNOT$-dihedral circuits of
  order 16}\footnote{Circuits over the $\s{\CNOT,T,X}$ are known as
  $\CNOT$-dihedral circuits of order 16 because the group generated by
  $T$ and $X$ is isomorphic to the dihedral group of order 16
  \cite{CMBSG16}. For brevity, we omit the order of the associated
  dihedral group and refer to $\s{\CNOT, T, X}$ circuits as
  $\CNOT$-dihedral circuits.}, and circuits over the $\s{\CNOT, T}$
gate set, known as \emph{CNOT+$T$ circuits}, received significant
attention. This led to a randomized benchmarking procedure for
$\CNOT$-dihedral circuits \cite{CMBSG16} as well as circuit
optimizations \cite{AMM,AMMR,AM} and improved distillation protocols
\cite{CH161} for $\CNOT$+$T$ circuits.

We give a finite presentation of $\CNOT$-dihedral operators of order
16 in terms of generators and relations, inspired by similar results
given for Clifford operators in \cite{Sel} and certain classes of
Boolean operators in \cite{Laf}. First, we introduce normal forms for
$\CNOT$-dihedral circuits. Then, we prove that, in the presence of
certain structural rules described in \cref{sec:prelims}, a finite set
of circuit equalities (the relations) suffices to reduce an arbitrary
$\CNOT$-dihedral circuit to its normal form. This shows that normal
form representations of $\CNOT$-dihedral operators always
exist. Finally, we show that distinct normal forms represent distinct
operators, which implies that normal form representations are unique.
These results yield a presentation by generators and relations of the
collection of $\CNOT$-dihedral operators as a symmetric monoidal
groupoid (see \cref{sec:prelims} for more details). By restricting the
generators and relations from $\s{\CNOT, T, X}$ to $\s{\CNOT, T}$ and
appropriately modifying the normal forms, we obtain an analogous
presentation of the symmetric monoidal groupoid of $\CNOT$+$T$
operators.

Our contributions can be seen as the reformulation of prior results in
the graphical language of quantum circuits. Indeed, it was shown in
\cite{CMBSG16} that the group of $n$-qubit $\CNOT$-dihedral operators
is isomorphic to the semidirect product $M\rtimes GA(n,\Z_2)$ where
$M$ is some subgroup of $\Z_8^{2^n}$ and $GA(n,\Z_2)$ is the general
affine group of order $n$ over the two-element field. Independently,
it was shown in \cite{AM} that the group of $n$-qubit $\CNOT$+$T$
operators is isomorphic to the semidirect product
$M'\rtimes GL(n,\Z_2)$ where $M'$ is some subgroup of $\Z_8^{2^n-1}$
and $GL(n,\Z_2)$ is the general linear group of order $n$ over the
two-element field. Using these characterizations, normal forms for
$\CNOT$-dihedral and $\CNOT$+$T$ circuits were discussed in
\cite{CMBSG16} and \cite{CH161} respectively. In contrast, we give
finitely many relations which are sufficient to generate all circuit
identities over $\s{\CNOT, T, X}$. Circuit transformations can
therefore take place at the circuit level which alleviates the need to
translate to and from another formalism. Moreover, a self-contained
equational theory of circuits is significantly easier to extend to new
gate sets since all equations remain valid in the presence of
additional gates.

The paper is organized as follows. In \cref{sec:prelims}, we discuss
preliminaries. In \cref{sec:gen-rel}, we introduce generators and
relations for $\CNOT$-dihedral operators. In \cref{sec:nfs}, we define
normal forms for $\CNOT$-dihedral circuits. In \cref{sec:existence},
we use the relations to show that every $\CNOT$-dihedral operator
admits a normal form. We show that distinct normal forms correspond to
distinct operators in \cref{sec:uniqueness}. Finally, we conclude and
discuss generalizations and future work in \cref{sec:conc}.

\section{Preliminaries}
\label{sec:prelims}

The notion of \emph{presentation} used here is similar to the usual
one used in group theory but applied to a more general algebraic
structure called a \emph{symmetric monoidal groupoid}. Working with
monoidal groupoids allows us to account for the usual horizontal
composition of unitaries (matrix multiplication) as well as for their
vertical composition (tensor product). In much the same way that a
presentation of a group implicitly provides the relations axiomatizing
the group operation, a presentation of a symmetric monoidal groupoid
implicitly includes relations which account for the horizontal and
vertical compositions and their interplay. We state these
\emph{structural rules} below in the graphical language of
circuits. For further details about symmetric monoidal groupoids, the
reader is encouraged to consult \cite{Mac, Sel2009}.

For every pair of operators $f$ and $g$ we have
\[
  \bifunctoriality
\]
The above equality is known as the \emph{bifunctorial law}. It implies
that circuits on disjoint sets of qubits commute and guarantees that
the collection of circuits under consideration forms a monoidal
groupoid. One obtains a symmetric monoidal groupoid in the presence of
a \emph{symmetry} which is a family of self-inverse operators which
act as generalized $\SWAP$ gates. For example, two instances of the
symmetry are
\[
  \symmetries
\]
which have the effect of permuting the order of the qubits. Every
instance of the symmetry satisfies a \emph{naturality law}, which
means that the symmetry has no effect beyond reordering the
qubits. For the instances above, the naturality is expressed by the
following circuit equalities, where $f$, $g$ and $h$ are arbitrary,
\[
  \Naturality
\]
In particular, the following \emph{spatial law} is a consequence of
the naturality of the symmetry, where $\lambda$ is an arbitrary scalar
represented as a gate without input or output wires.
\[
  \spatiality
\]
The symmetry also satisfies a property known as \emph{coherence} which
asserts that two circuits made of symmetries and implementing the same
permutation of wires are equal, e.g.,
\[
  \coherence
\]
Using symmetric monoidal groupoids allows us to focus on properties
that are specific to $\CNOT$-dihedral operators and to abstract away
generic properties of quantum circuits. In particular, the
bifunctorial law and the existence of a symmetry satisfying naturality
and coherence are assumed and needn't be explicitly included in the
presentation.

\section{Generators and relations}
\label{sec:gen-rel}

We recall the definition of the standard generators for
$\CNOT$-dihedral operators and introduce two derived generators to
streamline the presentation.

\begin{definition}
  \label{def:gens}
  The \emph{generators} are the scalar $\omega = e^{i\pi/4}$ and the
  gates $X$, $T$, and $\CNOT$ defined below.
  \[
    \begin{array}{lcr}
      \m{\begin{qcircuit}[scale=0.39]
          \grid{2.8}{0}
          \Xgate{}{1.4}{0}
        \end{qcircuit}} & = & \begin{bmatrix}
        0 & 1 \\
        1 & 0
      \end{bmatrix}
    \end{array}
    \qquad
    \begin{array}{lcr}
      \m{\begin{qcircuit}[scale=0.39]
          \grid{2.8}{0}
          \Tgate{}{1.4}{0}
        \end{qcircuit}} & = & \begin{bmatrix}
        1 & 0 \\
        0 & \omega
      \end{bmatrix}
    \end{array}
    \qquad
    \begin{array}{lcr}
      \m{\begin{qcircuit}[scale=0.39]
          \grid{2.8}{0,1.6}
          \CX{}{1.4}{1.6}{0}
        \end{qcircuit}} & = & \begin{bmatrix}
        1 & 0 & 0 & 0 \\
        0 & 1 & 0 & 0 \\
        0 & 0 & 0 & 1 \\
        0 & 0 & 1 & 0
      \end{bmatrix}
    \end{array}
  \]
\end{definition}

\begin{definition}
  \label{def:derived-gens}
  The \emph{derived generators} are the gates $U$ and $V$ defined
  below.
  \[
      \m{\begin{qcircuit}[scale=0.39]
          \grid{2.8}{0,1.6}
          \Ugate{}{1.4}{0}{1.6}    
        \end{qcircuit}}
      = \m{\begin{qcircuit}[scale=0.39]
          \cleargate{$T$}{2.6,1.6}
          \grid{7.2}{0,1.6}
          \CX{}{1.4}{1.6}{0}
          \Tgate{}{3.6}{0} 
          \CX{}{5.8}{1.6}{0}
        \end{qcircuit}}
    \qquad
      \m{\begin{qcircuit}[scale=0.39]
          \grid{2.8}{0,1.6,3.2}
          \Vgate{}{1.4}{0}{1.6}{3.2}    
        \end{qcircuit}}
      = \m{\begin{qcircuit}[scale=0.39]
          \cleargate{$T$}{2.6,3.2}
          \grid{11.6}{0,1.6,3.2}
          \CX{}{1.4}{3.2}{1.6}
          \CX{}{3.6}{1.6}{0}
          \Tgate{}{5.8}{0} 
          \CX{}{8}{1.6}{0}
          \CX{}{10.2}{3.2}{1.6}
        \end{qcircuit}}
  \]
\end{definition}

In accordance with \cref{sec:prelims}, we assume that all symmetries
are given and we refer to any instance of the symmetry as a $\SWAP$
gate. Because they act as affine transformations on basis states, we
refer to $X$, $\CNOT$, and $\SWAP$ as \emph{affine} gates and by
extension to circuits using only affine gates as \emph{affine
  circuits}. Similarly, we refer to $\omega$, $T$, $U$, and $V$ as
\emph{diagonal} gates and to circuits using only diagonal gates as
\emph{diagonal circuits}. If $C$ is a $\CNOT$-dihedral circuit, we
write $W_C$ to denote the operator represented by $C$. Note that if
$C$ is diagonal (resp. affine) circuit, then $W_C$ is diagonal
(resp. affine).

\begin{definition}
  \label{def:rels}
  The \emph{relations} are given in \cref{fig:rels}. We refer to
  relations $R_1$ through $R_6$ as \emph{affine relations}, to
  relations $R_7$ through $R_{10}$ as \emph{diagonal relations}, and
  to relations $R_{11}$ through $R_{13}$ as \emph{commutation
    relations}.
\end{definition}

\begin{figure}
  \input{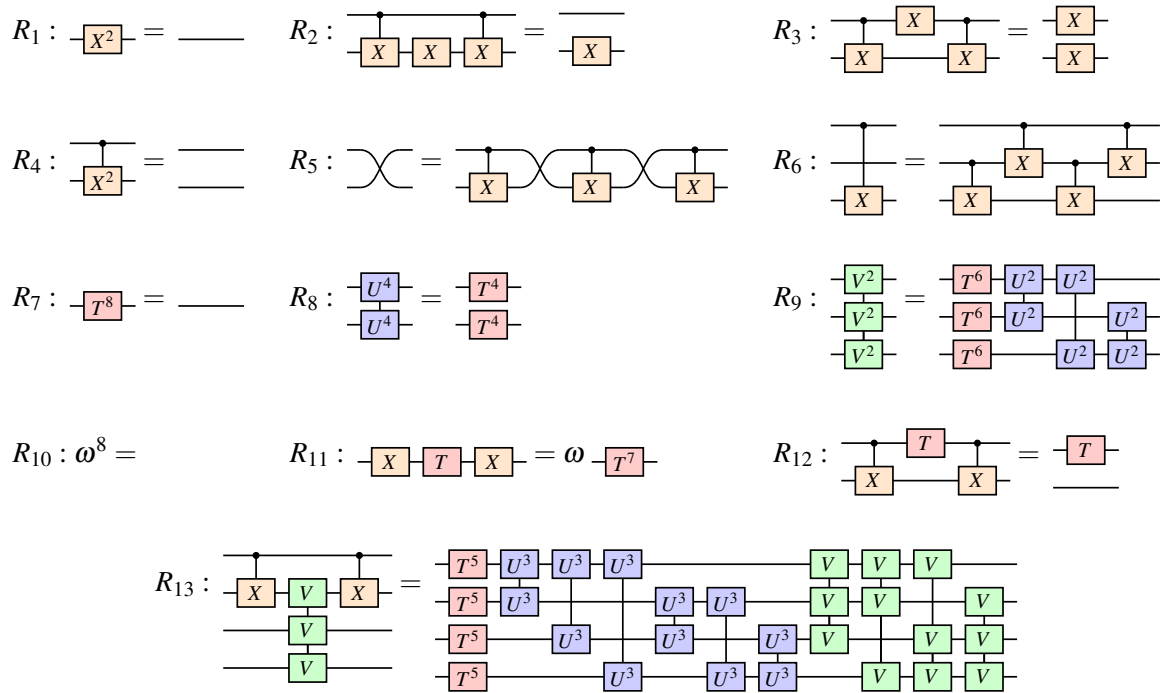}
  \caption{The relations. $R_1$ through $R_6$ are affine
    relations. $R_7$ through $R_{10}$ are diagonal relations. $R_{11}$
    through $R_{12}$ are commutation relations.}
  \label{fig:rels}
\end{figure}

In \cref{fig:rels} and throughout the rest of the paper, we use the
following notational conventions. We place global phases (i.e.,
scalars) in front of circuits as in the right-hand side of
$R_{11}$. Note that this is consistent with the spatial law. Gates
labelled $f^n$ for some integer $n\in\N$ denote the $n$-fold
composition of $f$ with itself, i.e.,
\[
  \block
\]
We call such a circuit an \emph{$f$-block of degree $n$}. By
extension, we write $\deg_f(C)$ for the maximum degree of the
$f$-blocks that appear in a circuit $C$.  We also use $\mathcal{T}$
and $\mathcal{X}$ to denote $T$ and $X$ blocks of arbitrary
degrees. Gates applied to non-adjacent qubits, as in $R_6$, $R_9$, and
$R_{13}$, are defined as adjacent-qubit gates on the top-most wires
conjugated by $\SWAP$ gates, e.g.,
\[
  \NonAdjacent
\]
Because diagonal gates on non-adjacent qubits are diagonal in the
computational basis we mildly abuse terminology and refer to circuits
such as the right-hand side of $R_{13}$ as diagonal circuits, even if
they contain non-diagonal $\SWAP$ gates.

The 13 relations of \cref{fig:rels} can be verified by explicit
computation. However, it is more illuminating to use the formalism of
\emph{phase polynomials} introduced in \cite{AMMR}.

\begin{definition}
  \label{def:litteral}
  Let $\oplus$ denote addition in $\Z_2$ and $\overline{x}$ denote the
  complement of $x$ in $\Z_2$. A \emph{literal} $l$ is either a
  Boolean variable $x$ or its inverse $\overline{x}$. A \emph{term}
  over $n$ Boolean variables is an expression of the form
  $l_1 \oplus \ldots \oplus l_n$ where each $l_i$ is a literal.
\end{definition}

\begin{definition}
  \label{def:ppolys}
  The $T$, $X$, and $\CNOT$ gates act on basis states as
  $T\ket{x} = \omega^{x}\ket{x}$, $X\ket{x} = \ket{\overline{x}}$, and
  $\CNOT\ket{x_1x_2}=\ket{x_1(x_1\oplus x_2)}$. It follows that the
  action of a $\CNOT$-dihedral circuit $C$ on an arbitrary basis state
  is given by
  \begin{equation}
    \label{eq:ppoly-def1}
    W_C\ket{x_1x_2\cdots x_n}=\omega^{p_C(x_1, x_2,\dots, x_n)}
    \ket{f_C(x_1,x_2,\dots, x_n)},
  \end{equation}
  where $f_C:\Z_2^n \to \Z_2^n$ is an affine reversible operator and
  $p_C:\Z_2^n \to \Z_2$ is an expression of the form
  \begin{equation}
    \label{eq:ppoly-def2}
    p_C(x_1,\ldots, x_n) = \sum_{k}^{i=1}a_i \cdot g_i
  \end{equation}
  for some $k\in \N$, some $a_i\in \Z_8$, and some terms $g_i$ on no
  more than $n$ variables. The expression in \cref{eq:ppoly-def1} is
  the \emph{phase polynomial representation of $C$} and the one in
  \cref{eq:ppoly-def2} is the \emph{phase polynomial associated with
    $C$}.
\end{definition}

Note that the phase polynomials of \cref{eq:ppoly-def2} use mixed
arithmetic: the ``outside'' sum is computed modulo 8 since
$\omega^8=1$ while the ``inside'' sums are computed modulo 2. As with
usual polynomials, we write 0 for the phase polynomial whose
coefficients are all 0.

Phase polynomials are a concise representation of the action of
$\CNOT$-dihedral circuits on states and can be used to prove that two
distinct circuits represent the same operator.

\begin{proposition}
  \label{prop:soundness}
  The relations of \cref{fig:rels} are sound.
\end{proposition}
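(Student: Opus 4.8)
The plan is to show each of the thirteen relations in \cref{fig:rels} is sound, meaning that the two circuits on either side of each equality represent the same $\CNOT$-dihedral operator. The natural tool, as foreshadowed by \cref{def:ppolys}, is the phase polynomial representation: an operator $W_C$ is completely determined by the pair $(f_C, p_C)$, where $f_C \colon \Z_2^n \to \Z_2^n$ is the affine reversible map on basis states and $p_C \colon \Z_2^n \to \Z_8$ is the phase polynomial. Two circuits are therefore equal if and only if they induce the same affine map $f_C$ and the same phase polynomial $p_C$ (the latter as a function $\Z_2^n \to \Z_8$, i.e.\ modulo $8$ in the outer sum and modulo $2$ in each inner term). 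So the whole proof reduces to computing, for each relation, these two data on both sides and checking they agree.

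First I would treat the affine relations $R_1$ through $R_6$. Here the phase polynomial is identically $0$ on both sides (no $T$, $U$, or $V$ gates appear), so soundness amounts to checking that the two sides implement the same affine reversible transformation $f_C$ on $\Z_2^n$. Using the basis-state actions from \cref{def:ppolys}—$X \colon x \mapsto \overline{x}$, $\CNOT \colon (x_1,x_2) \mapsto (x_1, x_1 \oplus x_2)$, and $\SWAP$ swapping two coordinates—each of these is a short composition of affine maps, and I would simply compose them. For instance $R_4$ ($\CNOT^2 = \mathrm{id}$) follows from $x_1 \oplus x_1 \oplus x_2 = x_2$, $R_2$ and $R_3$ express how $X$ conjugates through a $\CNOT$, $R_5$ expresses $\SWAP$ as three $\CNOT$s, and $R_6$ is the non-adjacent $\CNOT$ expanded via the convention fixed just before this proposition. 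These are all routine $\Z_2$-affine computations.

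Next I would handle the diagonal relations $R_7$ through $R_{10}$ and the commutation relations $R_{11}$ through $R_{13}$. For these, both sides fix the computational basis up to the affine part, so I would first confirm the affine maps match (for the purely diagonal $R_7$--$R_{10}$ the affine part is the identity; for the commutation relations one checks the conjugating affine gates cancel appropriately), and then compare phase polynomials. Unwinding the \emph{definitions} of the derived gates $U$ and $V$ from \cref{def:derived-gens} into their $\CNOT$/$T$ form and tracking the literal on each controlled $T$, one computes that $U$ contributes the phase $\omega^{x_1 \oplus x_2}$ and $V$ the phase $\omega^{x_1 \oplus x_2 \oplus x_3}$. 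The key arithmetic facts driving the diagonal relations are the mod-$8$ identities for these exponents: for example $R_8$ ($U^4 = T^{\otimes 2}$ up to phase) and $R_9$ rest on the Boolean identity expressing $x_1 \oplus x_2$ in terms of $x_1, x_2$ and the cross term $x_1 x_2$ after clearing factors of $4$ and $2$ respectively, i.e.\ on relations such as $4(x_1 \oplus x_2) \equiv 4x_1 + 4x_2 \pmod 8$ and $2(x_1 \oplus x_2) \equiv 2x_1 + 2x_2 - 4x_1x_2 \pmod 8$. Relation $R_{10}$ ($\omega^8 = 1$) is immediate from $\omega = e^{i\pi/4}$.

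The main obstacle will be $R_{13}$, and to a lesser extent $R_9$. These involve products of several $U$- and $V$-blocks on three or four wires, so the phase polynomial on the right-hand side is a long sum of terms of the form $\omega^{c \cdot l_1 \oplus \cdots \oplus l_m}$ with various coefficients $c \in \Z_8$ and various literals, and the crux is the generalized inclusion–exclusion identity relating $\bigoplus_i x_i$ to the symmetric functions of the $x_i$ modulo $8$. Concretely, I expect the heart of the verification to be showing that a multilinear Boolean expression such as $x_1 x_2 x_3$ (arising on the left from a cubic phase hidden in the $\CNOT$-conjugated $V$) equals, modulo $8$, the prescribed signed combination of $\bigoplus$-terms appearing on the right of $R_{13}$. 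This is the one place where the mixed mod-$2$/mod-$8$ arithmetic of \cref{def:ppolys} genuinely bites, and I would organize it by fixing notation for the coefficient of each literal term and checking equality term by term over all $2^4$ basis inputs, or equivalently by expanding both phase polynomials into the monomial basis $\{1, x_i, x_ix_j, x_ix_jx_k\}$ over $\Z_8$ and matching coefficients. Once the coefficients match, soundness of $R_{13}$ follows, and with it the proposition.
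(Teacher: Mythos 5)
Your proposal is correct and follows the same route as the paper: the paper likewise verifies soundness via the phase polynomial representation, singling out $R_{13}$ as the only nontrivial case and reducing it to the mod-$8$ identity expressing $x_1\oplus x_2\oplus x_3\oplus x_4$ as a combination of the symmetric $\oplus$-terms. You simply spell out more of the routine affine and low-arity diagonal checks that the paper leaves to the reader.
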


\begin{proof}
  We briefly discuss the case of $R_{13}$. Let $C_L$ and $C_R$ be the
  circuits on the left-hand side and right-hand side of $R_{13}$
  respectively. The phase polynomial representation of $C_L$ is
  \[
    W_{C_L}\ket{x_1x_2x_3x_4} = \omega^{x_1\oplus x_2\oplus x_3 \oplus
      x_4} \ket{x_1x_2x_3x_4}.
  \]
  It can be verified (see, e.g., \cite{AM}) that, for any
  $x_1, x_2, x_3, x_4 \in \Z_2$,
  \[
    x_1\oplus x_2\oplus x_3 \oplus x_4 = \sum_{i} 5x_i + \sum_{i<j}
    3(x_i\oplus x_j) + \sum_{i<j<k} x_i\oplus x_j\oplus x_k \mod 8.
  \]
  Hence
  \[
    W_{C_L}\ket{x_1x_2x_3x_4}=\omega^{\sum_{i} 5x_i + \sum_{i<j}
      3(x_i\oplus x_j) + \sum_{i<j<k} x_i\oplus x_j\oplus x_k}
    \ket{x_1x_2x_3x_4}
  \]
  which is the phase polynomial representation of $C_R$ so that
  $W_{C_L} = W_{C_R}$.
\end{proof}

\begin{remark}
  \label{rem:cnott-pres1}
  By considering only the $T$, $U$, $V$ and $\CNOT$ gates as
  generators, and by omitting the relations $R_1$, $R_2$, $R_3$,
  $R_{10}$, and $R_{11}$, one obtains a presentation of the symmetric
  monoidal groupoid of $\CNOT$+$T$ operators.
\end{remark}

\begin{remark}
  \label{rem:independence}
  It can be shown that the relations given in \cref{fig:rels} are
  \emph{independent}, in that it is not possible to derive one from
  the others. However, it is not currently known whether the relations
  are \emph{minimal}, though we believe this to be the case.
\end{remark}

\section{Normal forms}
\label{sec:nfs}

For each $\CNOT$-dihedral operator $W$ we choose a distinguished
circuit which we call the \emph{normal form} of $W$. We define normal
forms for affine and diagonal operators independently. For affine
operators, we use the normal forms introduced by Lafont in \cite{Laf}
which we recall here for completeness. In both cases, we introduce
convenient shorthand prior to introducing normal forms.

\begin{definition}
  \label{def:stairs}
  \emph{Ascending stairs} are circuits of the form
  \[
    \stair
  \]
  The identity circuit is the only ascending stair on a single
  qubit. \emph{Descending stairs} are defined similarly.
\end{definition}

\begin{definition}
  \label{def:ladder}
  \emph{Ladders} are circuits of the form
  \[
    \ladder
  \]
  The identity circuit is the only ladder on a single qubit.
\end{definition}

In \cref{def:ladder}, the ascending stair rises from the bottom qubit
to the top one. The descending stair, however, may or may not fall all
the way to the bottom qubit.

\begin{definition}
  \label{def:nf-affine}
  An \emph{affine normal form} is a circuit $A$ of the form
  \[
    \nfaff
  \]
  such that $\deg_X(A)\in\Z_2$ and $\deg_{\CNOT}(A)\in\Z_2$.
\end{definition}

\begin{example}
  \label{ex:affine}
  The affine operator defined by
  $\ket{x_1x_2x_3}\mapsto \ket{(\overline{x_2\oplus x_3})x_1(x_1\oplus
    x_2)}$, where $\oplus$ is addition in $\Z_2$ and $\overline{x}$ is
  the additive inverse of $x$ in $\Z_2$, has the following affine
  normal form
  \[
    \exampleAff
  \]
\end{example}

\begin{remark}
  \label{rem:order-affine}
  There are $2^{n-1}$ distinct stairs on $n$ qubits and thus $2^n - 1$
  distinct stairs on no more than $n$ qubits. This implies that the
  number of $n$-qubit ladders is $2^{n-1}(2^n-1)$ which in turn
  implies that the number of distinct affine normal forms on $n$
  qubits is
  \begin{equation}
    \label{eq:aff-nf}
    2^n \cdot \prod_{i=1}^n 2^{i-1}(2^i-1) = 2^n \cdot \prod_{i=1}^n
    (2^n -2^{i-1}).
  \end{equation}
  In \cref{eq:aff-nf}, the prefactor of $2^n$ accounts for the layer
  of $X$ gates which appear at the right of the normal form. Note that
  the expression in \cref{eq:aff-nf} coincides with the well-known
  formula for the cardinality of the general affine group of order
  $n$.
\end{remark}

\begin{definition}
  \label{def:Utriangle}
  \emph{$U$-triangles} are circuits of the form
  \[
    \Utriangle
  \]
  The identity circuit is the only $U$-triangle on a single qubit.
\end{definition}

\begin{definition}
  \label{def:Vtrapezoidtriangle}
  \emph{$V$-trapezoids} and \emph{$V$-triangles} are circuits of the
  form
  \[
    \Vtrapezoid \qquad \mbox{and} \qquad \Vtriangle
  \]
  The identity circuit is the only $V$-trapezoid or $V$-triangle on a
  single qubit. Similarly, the only $V$-triangle or $V$-trapezoid on
  two qubits is the identity circuit.
\end{definition}

\begin{definition}
  \label{def:nf-diag}
  A \emph{diagonal normal form} is a circuit $D$ of the form
  \[
    \nfdiag
  \]
  such that $k\in\Z_8$, $\deg_T(D)\in\Z_8$, $\deg_U(D)\in\Z_4$, and
  $\deg_V(D)\in\Z_2$.
\end{definition}

The normal forms introduced in \cref{def:nf-diag} correspond to an
ordering of the gates in a diagonal circuit according to which powers
of $\omega$ appear first, followed by $T$, $U$, and $V$ gates. The $U$
gates are positioned in lexicographical order, with respect to the set
of qubits they act on. The placement of $V$ gates also follows the
lexicographical ordering.

\begin{example}
  \label{ex:CCZ}
  The doubly-controlled Pauli $Z$ gate, whose matrix is
  $\diag(1,1,1,1,1,1,1,-1)$, has the following diagonal normal form
  \[
    \nfCCZ
  \]
\end{example}

\begin{remark}
  \label{rem:order-diag}
  In analogy with \cref{rem:order-affine}, we note that there are
  $8 \cdot 8^{\binom{n}{1}}\cdot 4^{\binom{n}{2}}\cdot
  2^{\binom{n}{3}}$ distinct diagonal normal forms.
\end{remark}

\begin{definition}
  \label{def:nf}
  A \emph{normal form} is a circuit of the form $DA$ where $D$ is a
  diagonal normal form and $A$ is an affine normal form.
\end{definition}

\begin{remark}
  \label{rem:order}
  It follows from \cref{rem:order-affine} and \cref{rem:order-diag},
  that the number of normal forms is
  \[
    8 \cdot 8^{\binom{n}{1}}\cdot 4^{\binom{n}{2}}\cdot
    2^{\binom{n}{3}} \cdot 2^n\cdot \prod_{i=1}^{n} (2^n-2^{i-1}) =
    2^{3 + 4\binom{n}{1} + 2\binom{n}{2} +
      \binom{n}{3}}\prod_{i=1}^{n} (2^n-2^{i-1}).
  \]
\end{remark}

\begin{remark}
  \label{rem:cnott-pres2}
  In the case of $\CNOT$+$T$ operators, the affine normal forms are
  replaced with \emph{linear} normal forms which are obtained by
  removing the final column of $X$-blocks from the circuits of
  \cref{def:nf-affine}. The $\CNOT$+$T$ diagonal normal forms are the
  scalar-free versions of the circuits of \cref{def:nf-diag}.
\end{remark}

\section{Existence}
\label{sec:existence}

In this section, we use the relations of \cref{fig:rels}, together
with the structural rules of \cref{sec:prelims}, to show that every
$\CNOT$-dihedral operator admits a normal form. For this, we first
establish that every $\CNOT$-dihedral circuit can be written as a
diagonal circuit, followed by an affine one. We then consider the
existence of diagonal and affine normal forms independently.

\begin{lemma}
  \label{lem:DA-decomp}
  If $C$ is a $\CNOT$-dihedral circuit, then there exists a diagonal
  circuit $D$ and an affine circuit $A$ such that $C=DA$.
\end{lemma}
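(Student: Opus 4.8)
The plan is to reduce an arbitrary $\CNOT$-dihedral circuit to the factored form $DA$ by pushing all affine gates ($X$, $\CNOT$, and $\SWAP$) to the right, past the diagonal gates ($\omega$, $T$, $U$, $V$). A $\CNOT$-dihedral circuit is, by definition, a sequence of gates drawn from the two classes, so the statement amounts to showing that any adjacent pair consisting of a diagonal gate to the \emph{right} of an affine gate can be rewritten as an affine gate to the right of a (possibly different) diagonal gate, while keeping the total operator fixed. If every such ``affine-then-diagonal'' adjacency can be commuted in this direction, then a standard bubble-sort argument repeatedly moves affine gates rightward until the circuit has all diagonal gates on the left and all affine gates on the right.

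First I would set up the induction on the number of gate occurrences in $C$, or equivalently an induction on the number of ``inversions'' (pairs where an affine gate precedes a diagonal gate). The base case is a circuit with no such inversions, which is already of the form $DA$. For the inductive step, I would locate the rightmost affine gate $a$ that is immediately followed by a diagonal gate $d$, and show that the product $da'$ (diagonal then affine) equals $ad$ for some diagonal circuit $d'$ and affine circuit $a'$; substituting reduces the inversion count by one, and the inductive hypothesis finishes the job. The core of the argument is thus a finite case analysis of how each affine generator conjugates each diagonal generator.

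The key computational step is to verify that conjugating a diagonal gate by an affine gate yields another diagonal circuit. Concretely: conjugating a $T$ (or any phase gate) by $X$ produces a phase gate times a scalar, which is exactly the content of relation $R_{11}$; conjugating a diagonal gate by $\CNOT$ or $\SWAP$ permutes or relabels the Boolean variables on which the phases depend but keeps the operator diagonal in the computational basis. This is cleanest in the phase polynomial language of \cref{def:ppolys}: an affine gate $g$ acts on a diagonal operator $W\ket{x}=\omega^{p(x)}\ket{x}$ by $g\inv W g\ket{x}=\omega^{p(f_g(x))}\ket{f_g(x)}\cdots$, and since $f_g$ is affine the composite $\omega^{p\circ f_g}$ is again a valid phase polynomial, so $g\inv W g$ is diagonal. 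Hence $W g = g\, (g\inv W g)$ rewrites the affine-then-diagonal adjacency as diagonal-then-affine.

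The main obstacle is to confirm that every one of these conjugations is actually \emph{derivable} from the structural rules together with the relations of \cref{fig:rels}, rather than merely semantically true. The affine relations $R_1$ through $R_6$ let me normalize and move $X$, $\CNOT$, and $\SWAP$ gates past one another, the commutation relations $R_{11}$, $R_{12}$, $R_{13}$ govern how $X$ and $\CNOT$ slide past $T$, $U$, and $V$, and the naturality of the symmetry handles the $\SWAP$ cases; the nontrivial bookkeeping is that an affine gate sliding past one diagonal gate may generate \emph{several} diagonal gates (as in $R_{13}$, where conjugating a $V$-gate by $\CNOT$ produces a whole layer of $T$, $U$, and $V$ gates). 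I would therefore organize the case analysis by the type of the diagonal gate being conjugated and invoke the matching commutation relation in each case, taking care that the affine gate which emerges on the right is still a single affine generator so that the inversion count genuinely decreases.
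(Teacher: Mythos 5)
Your strategy is essentially the paper's: reduce to a single diagonal generator immediately preceded by a single affine generator, dispose of the disjoint-support and scalar cases by the bifunctorial and spatial laws, and settle the remaining finitely many generator-pair cases by deriving explicit commutation rules from the relations and the naturality of the symmetry (the paper tabulates exactly these cases in \cref{fig:DA-cases} and derives one as an example). Your phase-polynomial computation is a useful soundness check, but, as you yourself note, it does not establish derivability, so the finite case analysis is where the real work lies in both versions; you correctly identify $R_{11}$, $R_{12}$, $R_{13}$ and naturality as the relevant ingredients.

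One concrete flaw: your termination measure does not work as stated. You claim that replacing the adjacency $ad$ by $d_1\cdots d_m\,a$ ``reduces the inversion count by one,'' but when $m>1$ (as in the $R_{13}$-derived case, where a single $V$ gate commuted past a $\CNOT$ becomes fourteen diagonal gates) every affine gate to the \emph{left} of $a$ acquires $m-1$ new inversions, so the total inversion count can strictly increase. The worry you do raise---that the emerging affine gate must be a single generator---is not the problem; the proliferation of diagonal gates is. The process still terminates, but you need a different measure: for instance, peel the last gate off the circuit and apply the inductive hypothesis to the prefix, then commute that one gate (if diagonal) leftward past the affine suffix, observing that each of the finitely many diagonal gates it spawns has strictly fewer affine gates remaining to its left. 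With that repair your argument coincides with the paper's.
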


\begin{figure}
  \input{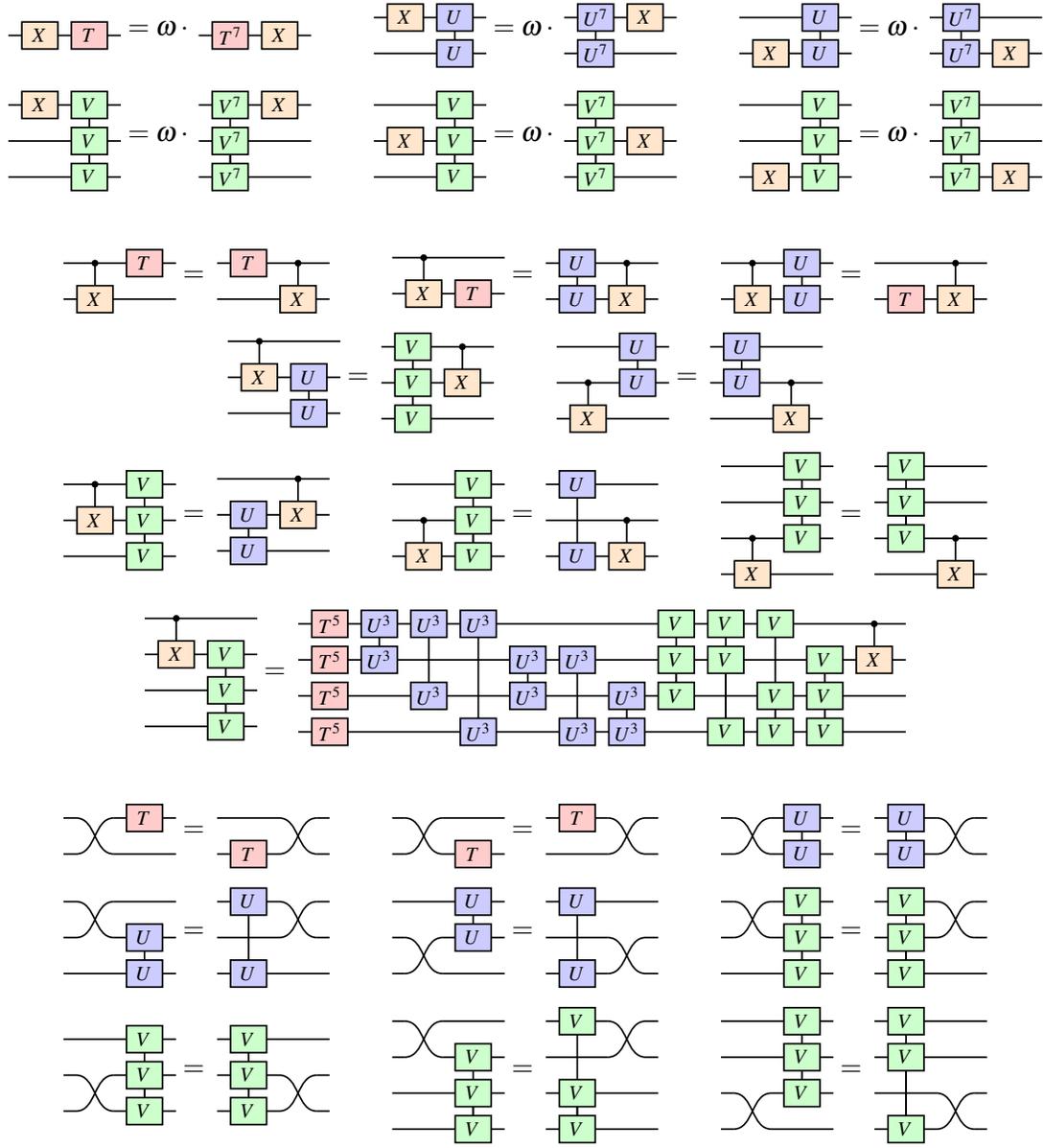}
  \caption{Derivable commutation rules.}
  \label{fig:DA-cases}
\end{figure}

\begin{proof}
  It suffices to show that the lemma is true when $C$ consists of a
  diagonal gate $d$ appearing to the right of an affine gate $a$. If
  $d$ and $a$ act on distinct qubits, they can be commuted by the
  bifunctorial law. Likewise, if $d$ is a power of $\omega$, it can be
  commuted past $a$ by the spatial law. This leaves the 25 cases
  listed in \cref{fig:DA-cases}. The first six cases show how to
  commute an $X$ gate past a diagonal gate. The next six cases show
  how to commute a $\CNOT$ gate past a diagonal gate. The last six
  cases show how to commute a $\SWAP$ gate past a diagonal
  gate. Verifying that each of these equations follows from the
  relations of \cref{fig:rels} is a tedious but straightforward
  exercise. We give an example derivation, using the relations $R_4$,
  $R_5$, $R_{12}$ and the bifunctorial law: \DerivationOne Note that
  in the last six cases of \cref{fig:DA-cases}, we only consider the
  two-qubit $\SWAP$, as opposed to more general $\SWAP$ gates. This is
  because coherence guarantees that an arbitrary $\SWAP$ can be
  expressed as a sequence of two-qubit $\SWAP$ gates.
\end{proof}

In order to prove that diagonal circuits admit a normal form, we start
by showing that commutation rules between diagonal gates can be
derived from the relations of \cref{fig:rels}.

\begin{lemma}
  \label{lem:diagonal-commute}
  Diagonal gates commute.
\end{lemma}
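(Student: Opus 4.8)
The plan is to reduce the claim to commutations of single diagonal generators and to discharge these using the relations together with the derivable rules already collected in \cref{fig:DA-cases}. First I would observe that it suffices to commute two single diagonal gates $d_1$ and $d_2$, since any diagonal circuit is a product of such gates and one reorders a product by repeatedly transposing adjacent gates. Several pairs are immediate: a power of $\omega$ commutes with everything by the spatial law, and two gates on disjoint sets of qubits commute by the bifunctorial law. Using the convention that a gate on non-adjacent qubits is an adjacent-qubit gate conjugated by $\SWAP$s, together with the coherence and naturality of the symmetry, I may conjugate by $\SWAP$ gates to assume that $d_1$ and $d_2$ act on adjacent qubits. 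This leaves the finitely many cases in which $d_1$ and $d_2$ are drawn from $\{T, U, V\}$ and their supports overlap.

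The engine for these cases is a single derived commutation: combining $R_{12}$ with $R_4$ shows that $\CNOT\, T_c = T_c\, \CNOT$ whenever $c$ is the control of the $\CNOT$, i.e.\ a $T$ gate commutes with any $\CNOT$ controlled by its wire. Dually, the rules of \cref{fig:DA-cases} record how conjugation by a single $\CNOT$ transforms the diagonal generators among themselves (for instance $\CNOT_{a\to b}\,T_b\,\CNOT_{a\to b} = U$ and $\CNOT_{a\to b}\,U\,\CNOT_{a\to b} = T_b$). Since each of $U$ and $V$ is, by definition, a single $T$ gate conjugated by a ladder of $\CNOT$s, and since that ladder can be reoriented onto any chosen wire of the gate's support using the naturality of the symmetry and the affine relations, I obtain the freedom to present a $U$- or $V$-gate as a conjugated $T$ whose inner $T$ sits on whichever wire is most convenient.

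With this in hand, a $T$ gate is commuted past an overlapping $U$ or $V$ gate by first orienting the latter so that the $T$'s wire is a control of every $\CNOT$ in its ladder; the $T$ then slides through the entire ladder by the engine above and past the inner $T$ by the bifunctorial law. For two multi-qubit gates I would unfold one of them into its $\CNOT$--$T$--$\CNOT$ form and transport the other diagonal gate rightward through the affine ladder using \cref{fig:DA-cases} (which keeps it diagonal, re-expressing via $R_{13}$ any four-fold parity that arises), past the inner $T$ using the $T$-case, and finally collapse the ladder with $R_4$; the phase-polynomial representation of \cref{def:ppolys} certifies each intermediate rewrite. I expect the main obstacle to be exactly these multi-qubit versus multi-qubit cases ($U$--$U$, $U$--$V$, and $V$--$V$), and in particular those with nested supports, where no single reorientation makes both ladders simultaneously compatible and one must instead push one gate through the other's ladder while ensuring that the transported gate never has to cross a wire that its own parity targets.
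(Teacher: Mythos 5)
Your proposal is correct and follows essentially the same route as the paper: reduce to the finitely many pairs of diagonal gates with overlapping support (via the spatial and bifunctorial laws and the $\SWAP$-symmetry of $U$ and $V$), then discharge each pair by unfolding $U$ or $V$ into its $\CNOT$--$T$--$\CNOT$ definition and sliding the other gate through the ladder using the derived rules of \cref{fig:DA-cases} --- exactly the technique of the paper's worked $U$--$V$ derivation. The only difference is bookkeeping: the paper reduces to four representative cases ($T$--$U$, $T$--$V$, and the two $U$--$V$ overlaps) and leaves the remaining overlaps implicit, whereas you explicitly flag the $U$--$U$ and $V$--$V$ cases, which is if anything the more careful enumeration.
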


\begin{proof}
  As in the proof of \cref{lem:DA-decomp}, we only need to consider
  the cases where two diagonal gates act on at least one common
  qubit. Moreover, since $U$ and $V$ are symmetric with respect to the
  qubits they act on -- i.e., they commute with $\SWAP$ gates as shown
  in \cref{fig:DA-cases} -- we can further reduce the number of cases
  to consider to the following four.
  \[
  \m{\begin{qcircuit}[scale=0.38]
    \grid{5}{0,1.6}
    \Tgate{}{1.4}{1.6}
    \Ugate{}{3.6}{0}{1.6}
  \end{qcircuit}}
= \m{\begin{qcircuit}[scale=0.38]
    \grid{5}{0,1.6}
    \Ugate{}{1.4}{0}{1.6}
    \Tgate{}{3.6}{1.6}
  \end{qcircuit}} \quad
  \m{\begin{qcircuit}[scale=0.38]
    \grid{5}{0,1.6,3.2}
    \Tgate{}{1.4}{3.2}
    \Vgate{}{3.6}{0}{1.6}{3.2}
  \end{qcircuit}}
= \m{\begin{qcircuit}[scale=0.38]
    \grid{5}{0,1.6,3.2}
    \Vgate{}{1.4}{0}{1.6}{3.2}
    \Tgate{}{3.6}{3.2}
  \end{qcircuit}} \quad
  \m{\begin{qcircuit}[scale=0.38]
    \grid{5}{0,1.6,3.2}
    \Ugate{}{1.4}{1.6}{3.2}
    \Vgate{}{3.6}{0}{1.6}{3.2}
  \end{qcircuit}}
= \m{\begin{qcircuit}[scale=0.38]
    \grid{5}{0,1.6,3.2}
    \Vgate{}{1.4}{0}{1.6}{3.2}
    \Ugate{}{3.6}{1.6}{3.2}
  \end{qcircuit}} \quad
  \m{\begin{qcircuit}[scale=0.38]
    \grid{5}{0,1.6,3.2,4.8}
    \Ugate{}{1.4}{3.2}{4.8}
    \Vgate{}{3.6}{0}{1.6}{3.2}
  \end{qcircuit}}
= \m{\begin{qcircuit}[scale=0.38]
    \grid{5}{0,1.6,3.2,4.8}
    \Vgate{}{1.4}{0}{1.6}{3.2}
    \Ugate{}{3.6}{3.2}{4.8}
  \end{qcircuit}}
  \]
  Verifying that the above equations follow from the relations in
  \cref{fig:rels} is another straightforward exercise. As an example,
  we derive the fourth equation below, using the definition of $V$ as
  well as the fact that $U$ commutes with the top wire of the $\CNOT$
  gate, which is one of the derivable rules of \cref{fig:DA-cases}.
  \DerivationTwo
\end{proof}

\begin{lemma}
  \label{lem:nf-diag}
  Every diagonal circuit admits a normal form.
\end{lemma}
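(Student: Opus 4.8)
The plan is to bring an arbitrary diagonal circuit into the shape of \cref{def:nf-diag} in two phases: a \emph{sorting} phase that groups the gates by type and qubit-set and lays them out in the canonical triangular order, followed by a \emph{reduction} phase that trims the block degrees into the required residue ranges. The workhorse throughout is \cref{lem:diagonal-commute}: since all diagonal gates commute, and since $U$ and $V$ additionally commute with $\SWAP$ (one of the derivable rules of \cref{fig:DA-cases}), we are free to reorder the gates of the circuit arbitrarily, including sliding gates across the $\SWAP$s implicit in the non-adjacent-qubit convention.

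For the sorting phase, I would first use the spatial law to slide every scalar $\omega$ to the far left and merge them into a single factor $\omega^k$. Next, using \cref{lem:diagonal-commute}, I would gather all $T$ gates acting on a common qubit into one $T$-block, all $U$ gates on a common pair into one $U$-block, and all $V$ gates on a common triple into one $V$-block. Commutativity also lets me put these blocks in the prescribed positions: the $T$-blocks form a single layer, the $U$-blocks are arranged as $U$-triangles in lexicographic order on their pair of qubits, and the $V$-blocks as $V$-triangles in lexicographic order on their triple, inserting identity (degree-$0$) blocks for any pair or triple that does not occur. After this phase the circuit already has the gate layout of a diagonal normal form, but the block degrees may be too large.

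The reduction phase then trims the degrees using the diagonal relations $R_7$ through $R_{10}$, and the key is to process the gate types from most to least complex so that the rewrites cascade downward without ever re-creating an already-normalized gate. Concretely, I would first reduce each $V$-block degree modulo $2$ by repeatedly applying $R_9$, which replaces $V^2$ by a product of three $T^6$-blocks and three $U^2$-blocks on the sub-pairs of the triple; these are commuted back into the $U$- and $T$-layers. Since $R_9$ produces no $V$ gates, this permanently achieves $\deg_V(D)\in\Z_2$. I would then reduce each $U$-block degree modulo $4$ via $R_8$, which rewrites $U^4$ as two $T^4$-blocks pushed into the $T$-layer and creates no $U$ or $V$ gates, giving $\deg_U(D)\in\Z_4$. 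Finally I would reduce each $T$-block degree modulo $8$ using $R_7$ and the scalar $k$ modulo $8$ using $R_{10}$, yielding $\deg_T(D)\in\Z_8$ and $k\in\Z_8$; since none of $R_7$, $R_8$, $R_9$ introduces a scalar, the factor $\omega^k$ is untouched by the block reductions.

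The main obstacle is precisely this interplay in the reduction phase: because $R_9$ feeds both the $U$- and $T$-layers and $R_8$ feeds the $T$-layer, a careless reduction order could reintroduce gates that were already normalized and fail to terminate. The heart of the argument is therefore the observation that these relations are \emph{triangular} with respect to gate type, $V \to U, T$ and $U \to T$ and $T \to 1$, so that processing $V$ before $U$ before $T$ guarantees termination and produces a circuit matching \cref{def:nf-diag} exactly.
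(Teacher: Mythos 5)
Your proposal is correct and follows essentially the same route as the paper: reorder the gates using \cref{lem:diagonal-commute}, then reduce degrees in the order $V$, $U$, $T$ via $R_9$, $R_8$, $R_7$, exploiting the fact that each relation's right-hand side only produces gates of strictly lower type. The only additions beyond the paper's own proof are minor elaborations it leaves implicit (merging and reducing the scalar via the spatial law and $R_{10}$, and inserting degree-$0$ blocks for missing qubit-sets), which do not change the argument.
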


\begin{proof}
  Let $C$ be an $n$-qubit diagonal circuit. By
  \cref{lem:diagonal-commute} the gates of $C$ may be reordered into
  the form of \cref{def:nf-diag}. It therefore suffices to bound the
  degree of $T$, $U$ and $V$ blocks by $7, 3$ and $1$,
  respectively. We first reduce the degree of all $V$ blocks modulo
  $2$ by applying relation $R_9$. Note that the right hand side of
  $R_9$ contains no $V$ gates and hence does not increase the degree
  of any $V$ block. Once all $V$ blocks have been reduced and gates
  have been reordered and combined appropriately, the $U$ blocks may
  likewise be reduced modulo $4$ via $R_8$. Again, the right hand side
  of $R_8$ contains only $T$ gates and hence cannot increase the
  degree of any $U$ or $V$ blocks.  Finally $R_7$ may be used to
  reduce the remaining $T$ blocks to degree at most $7$.
\end{proof}

\cref{lem:nf-diag} establishes that diagonal circuits admit normal
forms. To prove that arbitrary $\CNOT$-dihedral circuits can be
normalized, we need an analogous result for affine circuits, which was
proved by Lafont.

\begin{lemma}[Lafont \cite{Laf}]
  \label{lem:nf-affine}
  Every affine circuit admits a unique normal form.
\end{lemma}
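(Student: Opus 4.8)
The plan is to exploit the fact, recorded in \cref{rem:order-affine}, that the number of affine normal forms on $n$ qubits equals the order of the general affine group $GA(n,\Z_2)$, and to combine this with a constructive existence argument. Every affine circuit represents an element of $GA(n,\Z_2)$: the linear part comes from the $\CNOT$ and $\SWAP$ gates (landing in $GL(n,\Z_2)$) while the translation part comes from the $X$ gates (landing in $\Z_2^n$), and $GA(n,\Z_2)\cong\Z_2^n\rtimes GL(n,\Z_2)$. It therefore suffices to (i) show that the evaluation map sending a normal form to the operator it represents is surjective onto $GA(n,\Z_2)$, and (ii) invoke the cardinality count to upgrade surjectivity to a bijection, from which both existence and uniqueness follow.

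For existence (surjectivity) I would first separate the translation from the linear part. Using the affine relations $R_1$, $R_2$, and $R_3$ together with the naturality of the symmetry, any $X$ gate can be commuted to the right past $\CNOT$ and $\SWAP$ gates, so that every affine circuit equals a purely linear circuit followed by a single column of $X$-blocks; this column is exactly the final layer of \cref{def:nf-affine} and encodes the translation vector $b\in\Z_2^n$. It then remains to bring the linear part into the ladder shape of \cref{def:nf-affine}. I would do this by induction on $n$ in the style of Gaussian elimination: using $R_4$, $R_5$, and $R_6$ one rewrites the operator so that a single topmost ladder accounts for the first row and column of the associated matrix, after which the residual operator acts as an element of $GL(n-1,\Z_2)$ on the lower wires and the induction hypothesis applies. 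Informally, the ascending stair of the ladder selects a pivot via an appropriate $\SWAP$, while the descending stair clears the corresponding column via $\CNOT$s.

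For uniqueness the counting argument does the work. The existence step shows that every element of $GA(n,\Z_2)$ is represented by at least one normal form, i.e., the evaluation map from the finite set of affine normal forms to $GA(n,\Z_2)$ is surjective. By \cref{rem:order-affine} these two finite sets have the same cardinality, so the map is in fact a bijection. Consequently each affine operator is represented by exactly one normal form, and two syntactically distinct normal forms necessarily represent distinct operators. Combined with existence, this yields that every affine circuit admits a unique normal form.

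The main obstacle is the inductive existence step for the linear part: one must verify that the parametrized family of ladders is rich enough to realize, using the relations alone, the column operations required at each stage of the elimination, and that the elimination can always be arranged so as to leave the prescribed \emph{decreasing} staircase of ladders rather than some other gate ordering. Checking that the stair/ladder parametrization matches the degrees of freedom of $GL(n,\Z_2)$ exactly --- neither too few to reach every matrix nor so many as to introduce redundancy --- is where the argument is most delicate; the precise cardinality match in \cref{rem:order-affine} is both the guarantee that this can be done and the lever by which uniqueness is obtained. Since the result is due to Lafont, the detailed verification of this step can be imported from \cite{Laf}.
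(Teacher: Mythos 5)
The paper offers no proof of this lemma --- it is imported wholesale from Lafont \cite{Laf} --- and your sketch is a faithful reconstruction of Lafont's actual strategy: push the $X$ gates to the right using $R_1$ through $R_3$ and naturality, reduce the linear part to the ladder form by elimination using $R_4$ through $R_6$, and then use the cardinality match with $GA(n,\Z_2)$ from \cref{rem:order-affine} to upgrade surjectivity of the evaluation map on normal forms to a bijection, which yields uniqueness. This is correct in outline, and deferring the delicate verification that the relations realize each elimination step to \cite{Laf} is precisely what the paper itself does.
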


\begin{proposition}
  \label{prop:nf-existence}
  Every $\CNOT$-dihedral circuit admits a normal form.
\end{proposition}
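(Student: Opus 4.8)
The plan is to assemble this proposition directly from the three preceding results, since \cref{def:nf} defines a normal form to be a circuit $DA$ in which $D$ is a diagonal normal form and $A$ is an affine normal form. Given an arbitrary $\CNOT$-dihedral circuit $C$, I would first invoke \cref{lem:DA-decomp} to rewrite $C$ as a product $DA$ where $D$ is a diagonal circuit and $A$ is an affine circuit. This is the only step that mixes the two gate types, and it has already been discharged: the lemma supplies exactly the commutation rules (collected in \cref{fig:DA-cases}) needed to push every affine gate to the right of every diagonal gate.

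Once the decomposition $C = DA$ is in hand, the two factors can be normalized independently. I would apply \cref{lem:nf-diag} to replace $D$ by an equal diagonal normal form $D'$, and \cref{lem:nf-affine} to replace $A$ by its (unique) affine normal form $A'$. Because $D'$ is built entirely from diagonal gates and $A'$ entirely from affine gates, concatenating them produces a circuit $D'A'$ that is literally of the shape demanded by \cref{def:nf}, with $W_C = W_{D'A'}$ since each rewriting step preserves the represented operator.

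There is no real obstacle here: all of the combinatorial work -- the commutation analysis and the degree reductions modulo $8$, $4$, and $2$ -- has been carried out in the lemmas. The one point worth stating explicitly is that normalizing the diagonal factor and normalizing the affine factor do not interfere, which holds precisely because the $DA$ decomposition segregates the two classes of gates into disjoint horizontal regions of the circuit; reshaping one region leaves the other untouched. Hence every $\CNOT$-dihedral circuit admits a normal form.
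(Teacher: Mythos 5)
Your proposal is correct and follows exactly the same route as the paper's own proof: decompose $C$ as $DA$ via \cref{lem:DA-decomp}, then normalize the two factors independently using \cref{lem:nf-diag} and \cref{lem:nf-affine}, and concatenate. The extra remark about the two normalizations not interfering is a reasonable elaboration but adds nothing the paper's argument is missing.
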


\begin{proof}
  Let $C$ be a $\CNOT$-dihedral circuit. By \cref{lem:DA-decomp}, $C$
  can be written as a product $DA$ where $D$ is a diagonal circuit and
  and $A$ is an affine circuit. By \cref{lem:nf-diag} and
  \cref{lem:nf-affine}, $D$ has a diagonal normal form $D'$ and $A$
  has an affine normal form $A'$. The $\CNOT$-dihedral circuit $C$
  therefore has normal form $C=D'A'$.
\end{proof}

\begin{remark}
  \label{rem:cnott-existence}
  The existence of $\CNOT$+$T$ normal forms can be established by
  reasoning as in \cref{prop:nf-existence}.
\end{remark}

\section{Uniqueness}
\label{sec:uniqueness}

In this section, we show that normal forms are unique: distinct normal
forms represent distinct operators. To this end, we use the formalism
of phase polynomials introduced in \cref{sec:gen-rel}.

The action of powers of the diagonal gates of
definitions~\ref{def:gens} and \ref{def:derived-gens} on basis states
are $\omega^k\ket{x_1} = \omega^k\ket{x_1}$,
$T^k\ket{x_1}=\omega^{kx_1}\ket{x_1}$,
$U^k\ket{x_1x_2}=\omega^{k(x_1\oplus x_2)}\ket{x_1x_2}$, and
$V^k\ket{x_1x_2x_3}=\omega^{k(x_1\oplus x_2\oplus
  x_3)}\ket{x_1x_2x_3}$. As a result, if $D$ is a diagonal normal form
on $n$ qubits and $\ket{x}=\ket{x_1\ldots x_n}$ is a basis state then
$D\ket{x} = \omega^{p_D(x)}\ket{x}$, where $p_D(x)$ is an expression
of the form
\begin{equation}
  \label{eq:pp}
  p_D(x)= a_0 + \sum_i a_i\cdot x_i + 
  \sum_{i < j} b_{i,j}\cdot(x_i \oplus x_j) + 
  \sum_{i < j < k} c_{i,j,k}(x_i \oplus x_j \oplus x_k)
\end{equation}
with $a_i\in\Z_8$ for $i\in\s{0,\ldots,n}$, and $b_{i,j}\in\Z_4$,
$c_{i,j,k}\in\Z_2$, for $i,j,k \in \s{1,\ldots,n}$. Further, every
diagonal normal form corresponds to a unique expression of the form
\cref{eq:pp}, as distinct powers of the $T$, $U$, and $V$ gates
contribute to distinct terms in the expression $p_D(x)$.

To prove that every diagonal normal form represents a distinct
operator, it is helpful to express the mixed arithmetic polynomial of
\cref{eq:pp} as a \emph{multilinear polynomial over $\Z_8$}, i.e., as
a polynomial over $\Z_8$ that is linear in each of its variables
\cite{OD14}.

\begin{lemma}
  \label{lem:phase-to-multilinear}
  If $p(x)$ and $p'(x)$ are phase polynomials as in \cref{eq:pp} then
  there exists a multilinear polynomial $q(x)$ such that
  $p(y) - p'(y)= q(y)$ for all $y\in\Z_2^n$. Moreover, if
  $p(x)- p'(x) \neq 0$ then $q(x)\neq 0$.
\end{lemma}

\begin{proof}
  It can be verified by computation that the following equalities hold
  for $x_i,x_j,x_k\in\Z_2$
  \begin{align*}
    x_i\oplus x_j &= x_i + x_j - 2x_ix_j \\
    x_i\oplus x_j\oplus x_k &= x_i + x_j + x_k - 2x_ix_j - 2x_ix_k - 
                              2x_jx_k + 4x_ix_jx_k
  \end{align*}
  where $\oplus$ is addition in $\Z_2$ but all other arithmetic
  operations are performed in $\Z_8$. The first claim follows by
  applying the above equalities to $p(x) - p'(x)$. For the second
  claim, note that if $p(x)-p'(x)\neq 0$, we must have
  $a_i-a_i'\neq 0$ modulo 8, $b_{ij}-b_{ij}'\neq 0$ modulo 4, or
  $c_{ijk}-c_{ijk}'\neq 0$ modulo 2. If there exists $i,j,k$ such that
  $c_{ijk}-c_{ijk}'\neq 0$ modulo 2, then $4(c_{ijk}-c_{ijk}')\neq 0$
  modulo 8. This implies that $q(x)\neq 0$, since
  $4(c_{ijk}-c_{ijk}')$ is the unique coefficient associated with the
  monomial $x_ix_jx_k$. If no such $i,j,k$ exists, we can reason
  analogously with a coefficient of the form $b_{ij}-b_{ij}'$ or
  $a_i - a_i'$.
\end{proof}

\begin{lemma}
  \label{lem:diagonal-uniqueness}
  Distinct diagonal normal forms represent distinct operators.
\end{lemma}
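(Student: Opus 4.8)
The plan is to transport the question from circuits to phase polynomials and then to multilinear polynomials over $\Z_8$, where distinctness can be detected pointwise on the Boolean cube. Suppose $D$ and $D'$ are distinct diagonal normal forms on $n$ qubits, with associated phase polynomials $p_D$ and $p_{D'}$ of the form \cref{eq:pp}. Since the correspondence between diagonal normal forms and expressions of the form \cref{eq:pp} is a bijection—distinct powers of the $T$, $U$, and $V$ gates contribute to distinct terms—the assumption $D\neq D'$ forces $p_D - p_{D'}\neq 0$ as an expression.

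Next I would invoke \cref{lem:phase-to-multilinear} to obtain a multilinear polynomial $q$ over $\Z_8$ with $p_D(y)-p_{D'}(y)=q(y)$ for all $y\in\Z_2^n$, and with $q\neq 0$ as a polynomial since $p_D-p_{D'}\neq 0$. It then remains only to exhibit a point $y\in\Z_2^n$ at which $q$ does not vanish, so that $p_D(y)\neq p_{D'}(y)$ modulo $8$. This is the crux of the argument: one must show that a nonzero multilinear polynomial over $\Z_8$ cannot be the zero function on the Boolean cube $\Z_2^n$.

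To establish this, write $q=\sum_{S\subseteq\s{1,\ldots,n}} q_S\prod_{i\in S}x_i$ and choose $S$ of minimal cardinality among those with $q_S\neq 0$. Evaluating at the indicator point $y$ defined by $y_i=1$ exactly when $i\in S$, each monomial $\prod_{i\in S'}x_i$ contributes $q_{S'}$ when $S'\subseteq S$ and $0$ otherwise. Every proper subset $S'\subsetneq S$ satisfies $|S'|<|S|$ and hence $q_{S'}=0$ by minimality, so $q(y)=q_S\neq 0$. (Equivalently, the monomials are $\Z_8$-linearly independent as functions on $\Z_2^n$, since the evaluation matrix indexed by subsets is unitriangular under the inclusion ordering and hence invertible over any commutative ring.)

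Finally, at this $y$ we have $p_D(y)\neq p_{D'}(y)$ modulo $8$, so the states $D\ket{y}=\omega^{p_D(y)}\ket{y}$ and $D'\ket{y}=\omega^{p_{D'}(y)}\ket{y}$ differ because $\omega$ is a primitive eighth root of unity; therefore $W_D\neq W_{D'}$. The only genuinely non-routine step is the vanishing argument of the third paragraph, which supplies the bridge from ``nonzero as a polynomial'' to ``nonzero as a function on $\Z_2^n$''; everything else is bookkeeping built on top of \cref{lem:phase-to-multilinear}.
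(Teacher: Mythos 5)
Your proposal is correct and follows essentially the same route as the paper's proof: reduce distinctness of normal forms to distinctness of their phase polynomials, pass to a nonzero multilinear polynomial over $\Z_8$ via \cref{lem:phase-to-multilinear}, and evaluate at the indicator vector of a minimal-degree nonzero monomial, where all proper-subset monomials vanish by minimality. Your parenthetical remark about the unitriangular evaluation matrix is a nice reformulation, but the substance of the argument is identical to the paper's.
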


\begin{proof}
  Let $D$ and $D'$ be distinct diagonal normal forms with phase
  polynomials $p_D(x)$ and $p_{D'}(x)$ respectively. Since $D$ and
  $D'$ are normal, $p_D(x)$ and $p_{D'}(x)$ are of the form given in
  \cref{eq:pp}. And since $D$ and $D'$ are distinct, $p_D(x)$ and
  $p_{D'}(x)$ are likewise distinct, hence $p_D(x) - p_{D'}(x)\neq
  0$. \cref{lem:phase-to-multilinear} therefore implies that there
  exists a nonzero multilinear polynomial $q(x)$ such that
  $p(y) - p'(y)= q(y)$ for all $y\in\Z_2^n$. Now let
  $d \cdot x_{i_1}\ldots x_{i_j}$ be a non-zero term in $q(x)$ of
  lowest degree and let $y\in\Z_2^n$ be the vector with $1$'s in the
  $i_1\cdots i_j$ positions and $0$'s elsewhere. Then
  $q(y) = d \neq 0$, which implies that $p_D(y)-p_{D'}(y)\neq 0$ and
  therefore that $D\ket{y}\neq D'\ket{y}$.
\end{proof}

\cref{lem:diagonal-uniqueness} establishes that diagonal normal forms
are unique. To obtain the uniqueness of normal forms, we need a
similar result for affine normal forms, which was proved by Lafont.

\begin{lemma}[Lafont \cite{Laf}]
  \label{lem:affine-uniqueness}
  Distinct affine normal forms represent distinct operators.
\end{lemma}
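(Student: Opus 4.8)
The plan is to prove the statement by a counting argument, exploiting the fact that affine gates act as permutations of the computational basis and therefore realize exactly the affine reversible operators. First I would observe that each of $X$, $\CNOT$, and $\SWAP$ sends a basis state $\ket{x}$ to a basis state $\ket{f(x)}$ with trivial phase, where $f\in GA(n,\Z_2)$; hence every affine circuit $A$ represents the permutation operator $W_A$ associated with some $f_A\in GA(n,\Z_2)$, and two affine normal forms represent the same operator if and only if they induce the same affine map. It thus suffices to show that the map $\Phi$ sending an affine normal form $A$ to $f_A\in GA(n,\Z_2)$ is injective.

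Next I would establish that $\Phi$ is surjective. Given any $g\in GA(n,\Z_2)$, the $\CNOT$ gates generate the linear part (they implement the elementary row operations that generate $GL(n,\Z_2)$ over the two-element field) while the $X$ gates generate the translations, so there is an affine circuit $C$ with $f_C=g$. By \cref{lem:nf-affine} this circuit has an affine normal form $A$ with $W_A=W_C$, whence $f_A = g = \Phi(A)$. Therefore $\Phi$ maps onto $GA(n,\Z_2)$.

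Finally, I would invoke \cref{rem:order-affine}, which shows that the number of affine normal forms on $n$ qubits is exactly $2^n\cdot\prod_{i=1}^n(2^n-2^{i-1})=|GA(n,\Z_2)|$. Since a surjection between finite sets of equal cardinality is necessarily a bijection, $\Phi$ is injective; distinct normal forms therefore induce distinct affine maps and hence represent distinct operators.

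The main obstacle is the surjectivity step, or rather the generation fact underlying it: one must verify carefully that $\{X,\CNOT\}$ together with the given symmetries generate all of $GA(n,\Z_2)$, so that the pigeonhole can be applied. An alternative and more constructive route---the one taken by Lafont---bypasses the count by exhibiting an explicit procedure that reads the ladder-and-$X$-layer factorization directly off the matrix of $f$ and argues that each ladder and each $X$-block is uniquely determined, in effect a canonical decomposition of $GL(n,\Z_2)$. This yields injectivity directly but requires considerably more bookkeeping than the counting argument sketched above.
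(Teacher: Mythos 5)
Your argument is correct, but it is genuinely different from what the paper does: the paper offers no proof of this lemma at all, deferring entirely to Lafont's paper, where the result is obtained constructively by showing that the ladder-and-$X$-layer decomposition of an affine map is canonical. Your pigeonhole argument is a legitimate self-contained alternative given the other ingredients the paper supplies: soundness of the relations (\cref{prop:soundness}) guarantees that normalizing a circuit preserves the operator, \cref{lem:nf-affine} gives the existence half needed for surjectivity of $\Phi$, and \cref{rem:order-affine} supplies the exact count $2^n\prod_{i=1}^n(2^n-2^{i-1})=\lvert GA(n,\Z_2)\rvert$. What the counting route buys is brevity --- injectivity falls out of a cardinality coincidence rather than a case analysis of the normal-form syntax; what it costs is that the burden shifts onto two facts you should not leave implicit: (i) the generation claim that $\CNOT$ gates (as transvections) together with $X$ gates generate all of $GA(n,\Z_2)$, which you correctly flag, and (ii) the claim that the count in \cref{rem:order-affine} is \emph{exact}, i.e.\ that the syntactically distinct parameter choices really yield pairwise distinct normal forms --- the remark asserts this without proof, and if the true number of normal forms were strictly smaller than $\lvert GA(n,\Z_2)\rvert$ the surjection argument would collapse (indeed surjectivity would then be impossible, so exactness of the count is doing real work). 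Lafont's constructive proof establishes both distinctness and injectivity simultaneously and is therefore more robust, at the price of the bookkeeping you mention.
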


\begin{proposition}
  \label{prop:uniqueness}
  Distinct normal forms represent distinct operators.
\end{proposition}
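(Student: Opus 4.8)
The plan is to prove that distinct normal forms represent distinct operators by combining the two uniqueness results already established for the diagonal and affine parts. Recall that a normal form is a circuit $DA$ where $D$ is a diagonal normal form and $A$ is an affine normal form. The key structural observation is that the diagonal part and the affine part can be recovered separately from the operator $W_{DA}$, so that if two normal forms $DA$ and $D'A'$ represent the same operator, then necessarily $D = D'$ and $A = A'$ as normal forms.

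First I would recall from \cref{def:ppolys} that for any basis state $\ket{x}$ we have $W_{DA}\ket{x} = W_D\bigl(W_A\ket{x}\bigr)$, and that $A$ being affine means $W_A\ket{x} = \ket{f_A(x)}$ for some affine reversible $f_A:\Z_2^n\to\Z_2^n$, while $D$ being diagonal means $W_D\ket{y} = \omega^{p_D(y)}\ket{y}$. Consequently $W_{DA}\ket{x} = \omega^{p_D(f_A(x))}\ket{f_A(x)}$. The crucial point is that the affine function is exactly the map sending each basis label $x$ to the basis label appearing (up to phase) in $W_{DA}\ket{x}$: since the diagonal gate only contributes a scalar, the underlying permutation of basis states is determined by $f_A$ alone. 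Thus from the operator $W_{DA}$ we can read off $f_A$ as its induced action on basis labels, and then the residual phases $\omega^{p_D(f_A(x))}$ determine $p_D$ as a function on $\Z_2^n$ (after reindexing by $f_A\inv$).

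Next I would assemble the argument by contraposition. Suppose $DA$ and $D'A'$ are normal forms with $W_{DA} = W_{D'A'}$. Comparing the induced permutations of basis states forces $f_A = f_{A'}$, so $W_A = W_{A'}$; by \cref{lem:affine-uniqueness} (Lafont's uniqueness for affine normal forms) this gives $A = A'$. With the affine parts equal, the operators $W_D = W_{DA}W_A\inv = W_{D'A'}W_{A'}\inv = W_{D'}$ agree, and \cref{lem:diagonal-uniqueness} then yields $D = D'$. Hence $DA$ and $D'A'$ are the same normal form, which is exactly the desired statement.

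The main obstacle, and the only genuinely delicate step, is the clean separation of the diagonal and affine data from a single operator. One must argue rigorously that the basis-state permutation underlying $W_{DA}$ is precisely $f_A$ and carries no contribution from $D$; this is immediate once we note that $D$ is diagonal in the computational basis and therefore fixes every basis label up to a scalar, so composing with $D$ on the left never alters \emph{which} basis vector each $\ket{x}$ is sent to. Everything else is bookkeeping that reduces the proposition to the two lemmas already in hand, so I expect no further difficulty beyond stating this decomposition carefully.
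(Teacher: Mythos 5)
Your proof is correct and follows essentially the same route as the paper: assume $W_{DA}=W_{D'A'}$, observe that the diagonal factor cannot affect which basis label each $\ket{x}$ is sent to, conclude $W_A=W_{A'}$ and hence $W_D=W_{D'}$, and finish with \cref{lem:affine-uniqueness} and \cref{lem:diagonal-uniqueness} (the paper packages the key separation step as ``a diagonal affine operator is the identity,'' which is the same observation). The only cosmetic discrepancy is composition order --- the paper reads the circuit $DA$ as the operator $W_AW_D$ rather than $W_DW_A$ --- but your argument is insensitive to this, since $f_A$ is a bijection and the phase function determines $p_D$ either way.
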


\begin{proof}
  Let $C$ and $C'$ be two normal forms. By definition, $C=DA$ and
  $C'=D'A'$ for some diagonal normal forms $D,D'$ and some affine
  normal forms $A,A'$. Suppose that $C$ and $C'$ represent the same
  operator, i.e., that $W_C=W_{C'}$. Then $W_{A}W_{D} = W_{A'}W_{D'}$
  and therefore $W_{D} = W_{A}^\dagger W_{A'}W_{D'}$. Since $W_D$ and
  $W_{D'}$ are diagonal, $W_A^\dagger W_{A'}$ is a diagonal affine
  operator and thus $W_A^\dagger W_{A'}=1$, or $W_A=W_{A'}$. This
  implies that $W_D=W_{D'}$. The result then follows from
  \cref{lem:diagonal-uniqueness} and \cref{lem:affine-uniqueness}.
\end{proof}

By \cref{prop:uniqueness} and \cref{prop:nf-existence}, there is a
bijection between normal forms and $\CNOT$-dihedral operators so that
the number of $n$-qubit $\CNOT$-dihedral operators is equal to the
number of normal forms on $n$ qubits that was computed in
\cref{rem:order}.

\begin{corollary}
  \label{cor:cardinality}
  The order of the group of $\CNOT$-dihedral operators on $n$ qubits
  is
  \[
    2^{3 + 4\binom{n}{1} + 2\binom{n}{2} +
      \binom{n}{3}}\prod_{i=1}^{n} (2^n-2^{i-1}).
  \]
\end{corollary}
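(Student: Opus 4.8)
The plan is to deduce the count directly from the existence and uniqueness of normal forms, together with the enumeration already carried out in \cref{rem:order}. Concretely, I would consider the map $\Phi$ sending a normal form $N$ to the operator $W_N$ it represents, and argue that $\Phi$ is a bijection from the set of normal forms on $n$ qubits onto the group of $n$-qubit $\CNOT$-dihedral operators. This repackages the two preceding propositions into a single counting statement.

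First I would check that $\Phi$ is surjective. Every $\CNOT$-dihedral operator $W$ is, by definition, represented by some $\CNOT$-dihedral circuit $C$ with $W_C = W$, and by \cref{prop:nf-existence} this circuit admits a normal form $N$ with $W_N = W_C = W$. Thus $W$ lies in the image of $\Phi$, so $\Phi$ is onto. Next I would check that $\Phi$ is injective, which is precisely the content of \cref{prop:uniqueness}: distinct normal forms represent distinct operators. Hence $\Phi$ is a bijection, and the order of the group of $n$-qubit $\CNOT$-dihedral operators equals the number of normal forms on $n$ qubits.

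Finally, I would invoke the enumeration of \cref{rem:order}, which gives the number of normal forms on $n$ qubits as $2^{3 + 4\binom{n}{1} + 2\binom{n}{2} + \binom{n}{3}}\prod_{i=1}^{n} (2^n-2^{i-1})$. By the bijection established above, this is also the order of the group, which completes the proof.

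Since all of the substantive work has already been carried out — establishing the existence of normal forms in \cref{prop:nf-existence}, their uniqueness in \cref{prop:uniqueness}, and their enumeration in \cref{rem:order} — there is essentially no obstacle remaining. The only point requiring minor care is to invoke the two propositions in their correct roles: existence supplies surjectivity of $\Phi$, while uniqueness supplies injectivity, so that the counting of normal forms faithfully counts the operators.
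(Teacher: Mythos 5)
Your argument is correct and is exactly the paper's own reasoning: the sentence preceding the corollary combines \cref{prop:nf-existence} and \cref{prop:uniqueness} into a bijection between normal forms and operators and then cites the count from \cref{rem:order}. Your explicit surjectivity/injectivity packaging of the map $\Phi$ is just a slightly more detailed write-up of the same argument.
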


\begin{remark}
  \label{rem:cnott-uniqueness}
  The results of this section can be adapted to show that distinct
  $\CNOT$+$T$ normal forms represent distinct operators which then
  implies that the number of $\CNOT$+$T$ operators is
  \[
    2^{3\binom{n}{1} + 2\binom{n}{2} + \binom{n}{3}} \prod_{i=1}^{n}
    (2^n-2^{i-1}).
  \]
\end{remark}

\begin{remark}
  \label{rem:hadamard}
  The $\CNOT$-dihedral operators are not universal for quantum
  computation. One obtains the universal Clifford+$T$ gate set by
  adding the following Hadamard gate to the generators
  \[
    H = \frac{1}{\sqrt 2}\begin{bmatrix}
      1 & 1 \\
      1 & -1
    \end{bmatrix}.
  \]
  Since the Hadamard gate is not diagonal, one may wonder to what
  extent it contributes to diagonal Clifford+$T$ operators. We can use
  \cref{cor:cardinality} to quantify this contribution. Indeed, there
  are
  \[
    2^{3 + 4\binom{n}{1} + 2\binom{n}{2} + \binom{n}{3}} = O(2^{n^3})
  \]
  diagonal $\CNOT$-dihedral operators on $n$ qubits. In comparison, it
  is known from \cite{GS13} that for $n\geq 4$, the number of
  ancilla-free diagonal Clifford+$T$ operators on $n$ qubits is
  $8^{2^n-1} = O(2^{2^n})$. The Hadamard gate therefore contributes to
  the vast majority of diagonal Clifford+$T$ operators.
\end{remark}

\section{Conclusion}
\label{sec:conc}

We gave a finite presentation of the symmetric monoidal groupoid of
$\CNOT$-dihedral operators of order 16. To this end, we introduced a
notion of normal form for $\CNOT$-dihedral circuits and showed that
every $\CNOT$-dihedral operator admits a unique normal form. As a
corollary, we obtained a finite presentation of the symmetric monoidal
groupoid of $\CNOT$+$T$ operators.

Although we have shied from doing so in this paper, our methods can be
extended to $\CNOT$-dihedral operators of higher order. For
$\CNOT$-dihedral operators of order $2n$, the generators $\omega$ and
$T$ are replaced with the scalar $\zeta_n = e^{2\pi i / n}$ and the
phase gate
\[
  \begin{bmatrix}
    1 & 0 \\
    0 & \zeta_n
  \end{bmatrix}.
\]
A presentation may then be obtained by modifying the diagonal
relations appropriately. The results of \cite{AM} can be used to show
that it is sufficient to include the relevant order relations (akin to
$R_7$ and $R_{10}$) as well as relations reducing the order of
multi-qubit phase gates (akin to $R_8$, $R_9$, and $R_{13}$). In the
latter case, it suffices to introduce, for each $2^k$ dividing $n$, a
relation between a $k+1$ qubit phase gate of order $2^k$ and a circuit
using phase gates of smaller arity.

An avenue for future research is to find a rewrite system for
$\CNOT$-dihedral circuits. Indeed, \cref{prop:nf-existence}
establishes that every $\CNOT$-dihedral operator admits a normal form
but it does not contain an algorithm to normalize an arbitrary
$\CNOT$-dihedral circuit via rewriting. This is because the proof of
\cref{prop:nf-existence} appeals non-constructively to properties of
the ambient symmetric monoidal structure. Recent results in rewriting
theory address this problem \cite{BGKSZ} and might be used in order to
obtain an effective presentation of $\CNOT$-dihedral operators.

\section{Acknowledgements}
\label{sec:acknowledgements}

MA and NJR wish to thank the Banff International Research Station
(BIRS) where the ideas presented here were first discussed. NJR wishes
to thank Dmitri Maslov for sparking his interest in restricted
Clifford+$T$ circuits and to thank David Gosset and Yves Guiraud for
stimulating discussions. MA, JC, and NJR thank Miriam Backens and
anonymous referees for helpful comments on an earlier version of this
paper.

MA is partially funded by Canada's NSERC. JC and NJR are funded by the
Department of Defense.

\bibliographystyle{eptcs} \bibliography{cnotdihedral}

\end{document}